%% file: id_semi-blind_cell-free_arXiv.tex
\documentclass[conference]{IEEEtran}

\usepackage{amsmath,amsfonts,mathtools,amsthm,amssymb}
\usepackage[noend]{algorithmic}
\usepackage{algorithm}
\usepackage{array}
\usepackage[caption=false,font=footnotesize]{subfig}
\usepackage{textcomp}
\usepackage{stfloats}
\usepackage{url}
\usepackage{color}
\usepackage{verbatim}
\usepackage{graphicx}
\usepackage{cite}
\usepackage[hidelinks]{hyperref}
\usepackage[nolist]{acronym}
\usepackage{bbm}
\usepackage[caption=false,font=footnotesize]{subfig}
\def\BibTeX{{\rm B\kern-.05em{\sc i\kern-.025em b}\kern-.08em
    T\kern-.1667em\lower.7ex\hbox{E}\kern-.125emX}}

\makeatletter
\newcommand\fs@betterruled{%
  \def\@fs@cfont{\bfseries}\let\@fs@capt\floatc@ruled
  \def\@fs@pre{\vspace*{5pt}\hrule height.8pt depth0pt \kern2pt}%
  \def\@fs@post{\kern2pt\hrule\relax}%
  \def\@fs@mid{\kern2pt\hrule\kern2pt}%
  \let\@fs@iftopcapt\iftrue}
\floatstyle{betterruled}
\restylefloat{algorithm}
\makeatother

\graphicspath{{figures/}}

\newcommand\plotwidth{0.42}
\newcommand\plotwidthFull{0.32}

\input{abbreviations}
\input{macros}

\newtheorem{proposition}{Proposition}

\newtheorem{lemma}{Lemma}
\newtheorem{remark}{Remark}

\begin{document}
\title{On the Identifiability of Semi-Blind Estimation in Cell-Free Massive MIMO Networks
}

\author{\IEEEauthorblockN{Christian Forsch and Laura Cottatellucci}
\IEEEauthorblockA{Institute for Digital Communications, Friedrich-Alexander-Universität Erlangen-Nürnberg, Erlangen, Germany \\
Email: \{christian.forsch, laura.cottatellucci\}@fau.de}
}

\maketitle

\begin{abstract}
Semi-blind \ac{JCD} is a promising approach to mitigate pilot contamination in \ac{CF-MaMIMO} networks.
The effectiveness of such methods fundamentally depends on identifiability, i.e., the ability to unambiguously recover the unknown channel coefficients and transmitted data signals from the received uplink observations.
In this work, we investigate the identifiability of semi-blind \ac{JCD} from a large-scale system design perspective.
We consider a \ac{CF-MaMIMO} network in which \acp{AP} and \acp{UE} are spatially distributed according to \acl{PPP}es (\acs{PPP}s\acused{PPP}).
The resulting network topology is modeled as \ac{BRGG} that captures local connectivity induced by wireless propagation.
To enable a tractable analysis, the spatially dependent graph model is approximated by a surrogate independent-edge random graph with matched degree distributions.
Building on this model, we develop a recursive probabilistic analysis that characterizes the conditions under which semi-blind recovery succeeds with high probability.
The proposed analysis reveals an identifiability region as a function of key system parameters, including \ac{UE} and \ac{AP} densities and the connectivity radius beyond which channel coefficients are assumed negligible.
Monte Carlo simulations validate the predicted identifiability region and assess the accuracy of the proposed graph approximation.
The proposed framework provides system level insights into how network density and connectivity affect identifiability in large-scale \ac{CF-MaMIMO} systems and offers guidelines for selecting deployment parameters and pilot sequence lengths that enable reliable semi-blind recovery.
\end{abstract}

\begin{IEEEkeywords}
Cell-free massive MIMO, semi-blind estimation, identifiability, bipartite random graphs, Poisson point process.
\end{IEEEkeywords}

\acresetall

\vspace*{-1mm}
\section{Introduction}\label{sec:intro}
\Acl{CF-MaMIMO} (CF-MaMIMO)\acused{CF-MaMIMO}\acused{MaMIMO}\acused{MIMO} has emerged as a promising network architecture for next-generation wireless communication systems due to its ability to provide uniformly high service quality and energy-efficient communications through the distributed deployment of a large number of \acp{AP} that jointly serve \acp{UE}~\cite{Ngo2017,Ngo2018,Mohammadi2024}.
Accurate \ac{CSI} is essential to realize the potential gains of \ac{CF-MaMIMO} systems.
However, conventional pilot-based channel estimation schemes suffer from pilot contamination due to the large number of \acp{UE} and the limited availability of pilot signaling resources in practice.
To address this challenge, semi-blind estimation techniques, which jointly exploit limited pilot information and data signals, have gained increasing attention~\cite{Gholami2021a,Gholami2021b,Karataev2024,Zhao2024,Forsch2025,Zhong2024,Yang2025}.
These methods enable reduced pilot overhead while improving robustness against pilot contamination.

Despite their potential, the effectiveness of semi-blind estimation methods critically depends on the identifiability of the underlying bilinear system~\cite{Carvalho2004}.
In this context, identifiability refers to the ability to uniquely recover channel parameters and transmitted data symbols from the noise-free received signal.
In~\cite{Gholami2021a}, sufficient and necessary identifiability conditions in \ac{CF-MaMIMO} networks were derived within the framework of deterministic identifiability.
The network was represented by a bipartite graph whose two sets of nodes correspond to \acp{UE} and \acp{AP}, respectively.
It was shown that a sufficient identifiability condition is satisfied if the first phase of the Karp-Sipser procedure~\cite{Karp1981,Aronson1998} yields an empty \emph{core} in the associated bipartite graph.%
\footnote{
Here, the Karp-Sipser core is the residual subgraph obtained after iterative removal of \emph{\ac{AP}} leaf nodes and their neighboring \ac{UE} nodes.
Note that originally the Karp-Sipser algorithm iteratively removes \emph{any} leaf node and its neighbor.
}

The size of the Karp-Sipser core has been intensively studied within the framework of the \emph{maximum matching} problem on random graphs.
Existing works~\cite{Karp1981,Aronson1998,Bauer2001,Glasgow2025,Frieze2012} have mainly focused on random graph models with independently generated edges.
In~\cite{Karp1981,Aronson1998,Bauer2001,Glasgow2025}, the Karp-Sipser core size for a fundamental graph model with independently generated edges, called Erd\H{o}s–R\'{e}nyi graphs, has been studied.
The authors in~\cite{Frieze2012} analyzed the maximum matching problem in bipartite graphs with a fixed number of independently generated edges per vertex and derived conditions under which the first phase of the Karp-Sipser algorithm finds a maximum matching, i.e., the Karp-Sipser core is empty.
However, \ac{CF-MaMIMO} networks are inherently governed by spatial geometry and are more accurately described by geometric graphs~\cite{Penrose2003}.
In particular, \ac{CF-MaMIMO} networks are modeled by \acp{BRGG}, also known as AB random geometric graphs~\cite{Penrose2014,Dereudre2018}.
In contrast to independent-edge graphs, for \acp{BRGG}, a theory describing the Karp-Sipser core size and its asymptotic behavior remains largely unexplored.

In this work, we analyze the identifiability properties of semi-blind estimation in \ac{CF-MaMIMO} under spatially distributed \acp{AP} and \acp{UE}, modeled by \acl{PPP}es (\acs{PPP}s\acused{PPP}).
The resulting network is represented by a \ac{BRGG} whose spatial dependencies make the asymptotic analysis of identifiability challenging.
To address this issue, we approximate the \ac{BRGG} by a surrogate independent-edge random graph with matched degree distributions, which enables a recursive probabilistic analysis that tracks the evolution of the probability that channels and transmitted data become identifiable during the successive node removal via the Karp-Sipser procedure.
Based on this framework, we characterize the identifiability region as a function of key network parameters, i.e., \ac{UE} and \ac{AP} densities, connectivity radius, and pilot sequence length.
Monte Carlo simulations are provided to validate the predicted identifiability region and assess the accuracy of the proposed surrogate graph model. The developed framework offers insights into the interplay between network geometry and identifiability, providing useful guidelines for the design and dimensioning of large-scale CF-MaMIMO systems.

\vspace*{-1mm}
\section{System Model}\label{sec:sys_mod}
We consider the uplink of a \ac{CF-MaMIMO} network consisting of $L$ geographically distributed single-antenna \acp{AP} and $K$ synchronized single-antenna \acp{UE} with $L\geq K$ in a $D\times D$ square area.
The $L$ \acp{AP} are connected to a \ac{CPU} via fronthaul links.
The channel between \ac{AP} $l$ and \ac{UE} $k$ is described by the channel coefficient $h_{lk}$.
Due to the path loss and the distributed nature of \ac{CF-MaMIMO} networks, many of the elements in the channel matrix $\lmat{H}\!\in\!\Cset^{L\times K}$, with $(l,k)$-element $h_{lk}$, are negligible.
Based on the large-scale fading of the channel between \ac{AP} $l$ and \ac{UE} $k$, we assume the channel coefficient $h_{lk}$ to be negligible if \ac{AP} $l$ and \ac{UE} $k$ are separated by a distance higher than a given threshold $\gamma$.
We decompose the channel matrix $\lmat{H}$ accordingly into two matrices $\lmat{H}_I$ and $\lmat{H}_0$ such that $\lmat{H} = \lmat{H}_I + \lmat{H}_0$, where $\lmat{H}_I$ and $\lmat{H}_0$ denote the matrices of the relevant and negligible channel coefficients, respectively.
Thus, for each \ac{AP} $l$ it is required to estimate only the channels of \acp{UE} that are located in a disc centered around \ac{AP} $l$ with radius $\gamma$ while the signals transmitted from \acp{UE} outside the disc are treated as additive noise.
Throughout this paper, we assume that $\gamma\ll D$ and a uniform distribution of the \acp{AP} and \acp{UE} over the whole network area such that $\lmat{H}_I$ has a large number of zero elements, i.e., $\lmat{H}_I$ is sparse.

During the channel coherence time of $T$ channel uses, the channels are assumed to be constant and each \ac{UE} transmits $T_p$ pilot and $T_d=T-T_p$ data symbols.
The pilot sequences are assumed to be orthogonal and known by the \ac{CPU}.
Hence, there are $T_p$ pilot sequences $\lvec{x}^{(p)}\!\in\!\Cset^{T_p\times1}$, $p=1,\dots,T_p$, and \acp{UE} transmitting the same sequence $\lvec{x}^{(p)}$ are grouped in the set $\mathcal{G}_p$.
The matrix $\lmat{Y}\!\in\!\Cset^{L\times T}$ collects the signals received across all \acp{AP} over $T$ time instants and can be expressed as
\begin{equation}
    \lmat{Y} = \lmat{H}_I\lmat{X} + \lmat{H}_0\lmat{X} + \lmat{N},
    \label{eq:Y}
\end{equation}
where $\lmat{X}=[\pilot{\lmat{X}}\;\data{\lmat{X}}]\!\in\!\Cset^{K\times T}$ is the transmit symbol matrix, consisting of a pilot component $\pilot{\lmat{X}}\!\in\!\Cset^{K\times T_p}$ and a data component $\data{\lmat{X}}\!\in\!\Cset^{K\times T_d}$, and $\lmat{N}\!\in\!\Cset^{L\times T}$ is the matrix of \ac{AWGN}.

\vspace*{-1mm}
\section{Identifiability}\label{subsec:problem}
To achieve a reliable communication performance, especially in the presence of pilot contamination, the \ac{CPU} jointly estimates the channel matrix $\lmat{H}$ and the user data matrix $\data{\lmat{X}}$ given the received signal $\lmat{Y}$ and the pilot matrix $\pilot{\lmat{X}}$.
In this work, we focus on the identifiability of channels and data signals.
We operate within the framework of deterministic identifiability in which the parameters to be estimated are assumed to be deterministic and the channel component $\lmat{H}_0$ is assumed to be zero.
The parameters $\lmat{H}_I$ and $\data{\lmat{X}}$ are said to be identifiable if they can be unambigiously recovered from the noise-free received signal~\cite{Gholami2021a},
\begin{equation}
    \lmat{H}_I\lmat{X} = \tilde{\lmat{H}}_I\tilde{\lmat{X}} \quad\Rightarrow\quad \lmat{H}_I = \tilde{\lmat{H}}_I\quad\text{and}\quad\data{\lmat{X}} = \data{\tilde{\lmat{X}}}.
    \label{eq:ID}
\end{equation}

Sufficient and necessary conditions for identifiability were derived in~\cite{Gholami2021a}.
Furthermore, an iterative  algorithm based on a bipartite graph representation of the \ac{CF-MaMIMO} network was proposed to determine the unknown parameters $\lmat{H}_I$ and $\data{\lmat{X}}$.
It was found that the channel and data matrices are identifiable if all the nodes in the graph are removed through the iterative elimination of leaf/degree-one \ac{AP} nodes and their neighboring \ac{UE} nodes.

\vspace*{-1mm}
\section{Poisson Point Process Networks}\label{sec:PPP_networks}
In this section, we introduce a \ac{CF-MaMIMO} network model in which the spatial deployment of \acp{AP} and \acp{UE} is described by independent \acp{PPP}, hereafter referred to as a \ac{PPP} network.
This framework enables the statistical characterization of the \ac{BRGG} associated with the \ac{CF-MaMIMO} network, allowing the analysis of the iterative leaf node removal that characterizes the identifiability of channel coefficients and data.

We consider a network deployed over a square region of side length $D$ and analyze the random ensemble of \ac{PPP} networks in the asymptotic regime as $D\rightarrow\infty$.
The \acp{AP} and the \acp{UE} are modeled as points of independent homogeneous \acp{PPP}~\cite{Daley2003}.
In particular, the \acp{AP} follow a \ac{PPP} with density $\lambda_R$, corresponding to an average of $\lambda_R$ \acp{AP} per unit area.
\acp{UE} in $\mathcal{G}_p$ are modeled as a homogeneous \ac{PPP} with density $\lambda_T^{(p)}.$
By the superposition property of \acp{PPP}, the union of the \ac{UE} point processes associated with different pilot sequences is also a \ac{PPP} with total density $\lambda_T=\sum_{p=1}^{T_p} \lambda_T^{(p)}$.
We define the $\gamma$-neighborhood of a point $P$ in the \ac{AP} \ac{PPP} as the set of all points in the \ac{UE} \ac{PPP} located within a distance not greater than $\gamma$ from $P$.
Similarly, we define the $\gamma$-neighborhood of a point $P$ in the \ac{UE} \ac{PPP}.

The \ac{UE} and \ac{AP} \acp{PPP}, together with the notion of a $\gamma$-neighborhood, induce a random bipartite graph in which the two disjoint node sets correspond to the \ac{AP} \ac{PPP} and the \ac{UE} \ac{PPP}.
An edge exists between an \ac{AP} node and a \ac{UE} node if and only if their distance is at most $\gamma$ or, equivalently, the \ac{AP} lies within the $\gamma$-neighborhood of the \ac{UE} and the \ac{UE} lies within the $\gamma$-neighborhood of the \ac{AP}.
The resulting ensemble of random bipartite graphs is referred to as the standard ensemble of \ac{PPP} networks with parameters $(\lambda_T^{(p)}, \lambda_R, \gamma)$.
Based on standard properties of \acp{PPP}, we can state the following results:
\begin{itemize}
    \item For any bounded set $\mathcal{A}\subset\Rset^2$, the number of points $N(\mathcal{A})$ in $\mathcal{A}$ from a \ac{PPP} with density $\lambda$ follows a Poisson distribution,
    \vspace*{-2mm}
    \begin{equation}
        \mathrm{Pr}(N(\mathcal{A})=k) = \frac{\er^{-\Lambda} \Lambda^k}{k!},
        \label{eq:N_A}
        \vspace*{-1mm}
    \end{equation}
    with parameter $\Lambda=\lambda |\mathcal{A}|$ where $|\mathcal{A}|$ is the area of $\mathcal{A}.$
    \item From the previous result, the number of \acp{AP} in the  $\gamma$-neighborhood of a given \ac{UE} is Poisson distributed with parameter $\Lambda_R=\lambda_R\pi\gamma^2$.
    This random variable coincides with the edge degree of the corresponding \ac{UE} node.
    \item Similarly, the number of \acp{UE} in $\mathcal{G}_p$ in the $\gamma$-neighborhood of a given \ac{AP} follows the Poisson distribution with parameter $\Lambda_T^{(p)}=\lambda_T^{(p)}\pi \gamma^2$.
    This random variable coincides with the edge degree of the associated \ac{AP} node.
    \item As the average number of edges originated in a \ac{UE} node is $\lambda_R\pi\gamma^2$ and the average number of \acp{UE} in $\mathcal{G}_p$ in a \ac{PPP} network of area $D^2$ is $\lambda_T^{(p)}D^2$, the average number of edges in the corresponding bipartite graph is $\pi\gamma^2\lambda_R\lambda_T^{(p)}D^2$.
    The average number of edges scales linearly with the average number of \ac{UE} nodes $\lambda_T^{(p)} D^2$ or, alternatively, with the average number of \ac{AP} nodes $\lambda_R D^2$, implying that the bipartite random graph is \emph{sparse}. 
\end{itemize}
The bipartite random graphs generated as described above are characterized by independently generated nodes, while the edges exhibit intrinsic correlations induced by the underlying geometric constraints.
They are known in the literature as \acp{BRGG} \cite{Penrose2003} or AB random geometric graphs \cite{Penrose2014}.

\vspace*{-1mm}
\section{Asymptotic Identifiability for PPP Networks}\label{sec:ID_PPP}
In this section, we investigate the identifiability of the standard ensemble of \ac{PPP} networks defined by the parameters $(\lambda_T^{p},\lambda_R,\gamma)$ in the asymptotic regime where the network dimensions tend to infinity, i.e., $D\rightarrow\infty$.
The objective is to determine whether the bilinear inverse problem associated with the joint estimation of channel coefficients and transmitted data is asymptotically identifiable with high probability, independently of a specific network realization.
In particular, identifiability is determined solely as a function of macroscopic network parameters, such as the \ac{UE} and \ac{AP} spatial intensities and the neighborhood radius.

To this end, we apply the  algorithm for identifiability proposed in~\cite{Gholami2021a}.
It decomposes the identification assessment into $T_p$ identification procedures.
Each procedure focuses on the \acp{UE} in the set $\mathcal{G}_p$ with $p=1,\dots,T_p$ and determines whether the corresponding unknown parameters in $\lmat{H}_I$ and $\data{\lmat{X}}$, are identifiable.
Such parameters are  collected in the sub-matrices $\lmat{H}_{I,p}$ and $\data{\lmat{X}}_p$. 
The algorithm starts by initializing each \ac{AP} node with the sum of the channel coefficients of all \acp{UE} located within its $\gamma$-neighborhood, which can be inferred from the pilot transmission.
In the first iteration, the algorithm selects all \ac{AP} nodes that are leaves, i.e., connected to a single \ac{UE}.
For each such node, the corresponding channel coefficients and transmitted data symbols can be uniquely recovered.
The knowledge of the transmitted data further enables the recovery of all channel coefficients between the corresponding transmitting \ac{UE} and the \acp{AP} in its $\gamma$-neighborhood; see~\cite{Gholami2021a} for details.
Once identified, these parameters can be removed from the sub-matrices $\lmat{H}_{I,p}$ and $\data{\lmat{X}}_p$ or, equivalently, from the bilinear system of equations while the identified channel coefficients can be subtracted from the aggregated quantities computed during initialization.
From a graph-theoretic perspective, this operation is equivalent to removing the leaf \ac{AP} node, the associated \ac{UE} node, and its incident edges from the bipartite graph.
The resulting residual graph may in turn contain new leaf \ac{AP} nodes, and the procedure is iteratively repeated until no leaf \ac{AP} nodes remain.
The system is identifiable if, for all $p=1,\dots,T_p$, the procedure removes all columns from the submatrices $\lmat{H}_{I,p}$ and $\data{\lmat{X}}_p$, corresponding to an empty residual graph, i.e., the Karp-Sipser core is empty.

Unfortunately, the current theoretical understanding of \acp{BRGG} is not sufficiently developed to enable a rigorous analysis of the associated Karp-Sipser core.
The main difficulties stem from the correlation among edges induced by the underlying geometric constraints.
To overcome this limitation, we relax the edge dependencies and approximate the original model by an analytically tractable independent-edge \ac{PPP} network with matched macroscopic statistics, i.e., matched node degree distributions as functions of the parameters $(\lambda_T^p,\lambda_R,\gamma)$.
Furthermore, we observe that networks containing degree-one \ac{UE} nodes exhibit a nonzero probability of non-identifiability since two or more degree-one \ac{UE} nodes connected to the same \ac{AP} are not identifiable.
Therefore, degree-one \ac{UE} nodes are removed from the graph.
From a practical perspective, this corresponds to enforcing a minimum level of macro-spatial diversity for all \acp{UE}.
Additionally, we remove all the \ac{UE} and \ac{AP} nodes with zero degree since they are intrinsically inactive in the network.
The identifiability condition for the resulting \ac{PPP} network is summarized in the following proposition.

\begin{proposition}[Identifiability Condition for a \ac{PPP} Network] \label{prop:PPP_theorem} 
Consider a \ac{PPP} network with \ac{AP} spatial density $\lambda_R$ and \ac{UE} spatial density $\lambda_T^{(p)}$ for the \acp{UE} in group $\mathcal{G}_p$ transmitting the pilot sequence $\mathbf{x}_p^{(p)}$.
Define
\begin{equation}
    \Lambda_T^{(p)}=\lambda_T^{(p)}\pi\gamma^2
    \label{eq:Lambda_T_p}
\end{equation}
as the average number of \acp{UE} in $\mathcal{G}_p$  within a $\gamma$-neighborhood of a given \ac{AP}, and
\begin{equation}
    \Lambda_R= \lambda_R\pi\gamma^2
    \label{eq:Lambda_R}
\end{equation}
as the average number of \acp{AP} within the $\gamma$-neighborhood of a given \ac{UE}.

Assume that \acp{AP} with an empty $\gamma$-neighborhood and \acp{UE} with less than two \acp{AP} in their $\gamma$-neighborhood are neglected, and edges in the associated bipartite graph are generated independently.
Then, in the asymptotic regime where the network dimension $D\rightarrow\infty$, the unknown parameters $\lmat{H}_I$ and $\data{\lmat{X}}$ are identifiable if for each group of users $\mathcal{G}_p$ the following fixed-point equation
\vspace*{-1mm}
\begin{equation}
    f(\Lambda_T^{(p)}, \Lambda_R, z) =\epsilon_{\Delta}\frac{\er^{-\Lambda_R w(z)}-\er^{-\Lambda_R}}{1-\er^{-\Lambda_R}} = z
    \label{eq:z}
    \vspace*{-1mm}
\end{equation}
with $w(z) = \er^{-\tilde{\Lambda}_T^{(p)} z}$, $\epsilon_{\Delta} = 1 - \frac{\tilde{\Lambda}_T^{(p)}}{\er^{\tilde{\Lambda}_T^{(p)}}-1}$, and $\tilde{\Lambda}_T^{(p)}=\Lambda_T^{(p)}(1-\Lambda_R\er^{-\Lambda_R})$ admits a single solution in $z=0$ for $z\in[0, \epsilon_{\Delta}]$.
\end{proposition}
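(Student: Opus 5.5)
The plan is a density-evolution (tree-recursion) argument for peeling AP leaves on the surrogate independent-edge bipartite graph, applied separately to each group $\mathcal{G}_p$. The first step fixes the degree laws after pruning.

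A UE degree is Poisson$(\Lambda_R)$. Conditioning on degree at least two gives the law
\[
\Pr(d=k)=\frac{\er^{-\Lambda_R}\Lambda_R^k/k!}{1-\er^{-\Lambda_R}-\Lambda_R\er^{-\Lambda_R}},\qquad k\ge 2 .
\]
An AP neighborhood is Poisson$(\Lambda_T^{(p)})$. Removing degree-one UEs is an independent thinning, with retention probability $1-\Lambda_R\er^{-\Lambda_R}$ up to the zero-degree normalization. The AP degree therefore becomes Poisson$(\tilde{\Lambda}_T^{(p)})$, and conditioning on degree at least one gives a zero-truncated Poisson.

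The second step passes to edge perspectives. The size-biased AP degree minus one, i.e. the number of other UEs seen from an AP reached along an edge, is again Poisson$(\tilde\Lambda_T^{(p)})$ conditioned appropriately. The size-biased UE excess degree is a Poisson$(\Lambda_R)$ variable shifted and truncated accordingly. The constant $\epsilon_\Delta=1-\tilde\Lambda_T/(\er^{\tilde\Lambda_T}-1)$ is the probability that an AP (zero-truncated law) has degree at least two, i.e. is \emph{not} a leaf. By the standard local weak convergence of sparse independent-edge graphs with these degree laws, as $D\to\infty$ the depth-$t$ neighborhood of a random edge converges to a two-type Galton--Watson tree. This lets me write an exact recursion for the limit.

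The third step is the recursion. Let $z_t$ be the probability that, after $t$ rounds, an AP has not yet become a leaf. Let $x_t$ be the probability that a UE survives, i.e. is not removed via some other AP that became a leaf.

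An AP becomes a leaf once all but one of its UE neighbours are removed. Its other neighbours number Poisson$(\tilde\Lambda_T)$, each surviving independently. Consequently, the message that an AP is still ``blocking'' is governed by $w(z)=\er^{-\tilde\Lambda_T z}$, the Poisson generating function evaluated at the survival probability.

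A UE is removed if at least one of its other AP neighbours is a leaf. Averaging over the truncated Poisson$(\Lambda_R)$ UE degree gives the factor
\[
\frac{\er^{-\Lambda_R w}-\er^{-\Lambda_R}}{1-\er^{-\Lambda_R}} .
\]
Multiplying by the non-leaf probability $\epsilon_\Delta$ at the root AP closes the loop as $z_{t+1}=f(\Lambda_T^{(p)},\Lambda_R,z_t)$ with $z_0=\epsilon_\Delta$. The main bookkeeping obstacle is getting the size-biasing and conditioning exactly right so that the recursion is precisely $f$; I would verify it by matching the generating functions term by term.

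The final step is monotonicity and the conclusion. The map $f$ is increasing in $z$ on $[0,\epsilon_\Delta]$, and $f(\epsilon_\Delta)\le\epsilon_\Delta$. Hence $z_t$ decreases monotonically to the largest fixed point in $[0,\epsilon_\Delta]$. If the only fixed point there is $z=0$, then $z_t\to0$, so the fraction of surviving nodes after $t$ rounds tends to zero in the tree limit.

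Passing from ``vanishing fraction'' to an empty core in the finite graph is the main obstacle. For this I would follow the Karp--Sipser/Aronson--Frieze--Pittel differential-equation (Wormald) argument or cite the corresponding result for configuration-type bipartite models. This shows that, when the fixed point is unique at $0$, the core is empty with high probability as $D\to\infty$.

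Applying the argument to each $\mathcal{G}_p$ and invoking the sufficient condition of~\cite{Gholami2021a} (empty residual graph for all $p$) then yields identifiability of $\lmat{H}_I$ and $\data{\lmat{X}}$.
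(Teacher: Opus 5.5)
Your outline follows the paper's proof step for step. It uses the same pruned degree laws (Poisson$(\Lambda_R)$ truncated to $k\ge2$ for UEs, zero-truncated Poisson$(\tilde\Lambda_T^{(p)})$ for APs), the same edge-perspective $\beta,\alpha$, and the same recursion $z^{(\ell)}=\epsilon_\Delta\beta(1-\alpha(1-z^{(\ell-1)}))$ seeded at $\epsilon_\Delta$. It also uses the same monotone convergence to the largest fixed point. The paper stops there: it quotes Richardson--Urbanke (Thms.~3.50, 3.59) and declares $z^{(\ell)}\to0$ ``equivalent to the emptiness of the Karp--Sipser core''. The two places where you promise more are exactly where your argument would fail.

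\textbf{Empty core w.h.p.} First, no Wormald or Aronson--Frieze--Pittel argument can give an empty core with high probability here. After pruning, a positive fraction $B_2$ of UEs has degree exactly two. A cycle $u_1a_1u_2a_2\cdots u_ka_ku_1$ whose UEs all have degree two is never peeled. Indeed, $a_i$ becomes a leaf only after $u_i$ or $u_{i+1}$ is removed, while $u_j$ can be removed only through $a_{j-1}$ or $a_j$, so no $u_j$ can be removed first. In the configuration-type model behind your local-tree limit, the number of such $2k$-cycles is asymptotically Poisson with mean $\mu^k/(2k)$, where $\mu=\beta'(0)\alpha'(1)=\tilde\Lambda_T^{(p)}\Lambda_R\mathrm{e}^{-\Lambda_R}/(1-\mathrm{e}^{-\Lambda_R})>0$. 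The $4$-cycles alone give $\liminf\Pr(\text{core}\neq\emptyset)\ge1-\mathrm{e}^{-\mu^2/4}>0$ for every parameter choice. This is the analogue of the non-vanishing block erasure probability of LDPC ensembles with degree-two variable nodes. What density evolution plus concentration actually gives is a vanishing \emph{fraction} of unidentified UEs. An empty core w.h.p.\ would require minimum UE degree three; otherwise the core is empty only with probability tending to a constant below one, which is close to one merely because $\mu$ is tiny when $\Lambda_R\gg1$.

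\textbf{The recursion itself.} Second, the ``term-by-term'' check you defer would not return $f$. On the limiting tree every UE starts unknown. So the exact recursion for the probability that a UE reached along an edge is still unresolved is $x_\ell=\beta(1-\alpha(1-x_{\ell-1}))$ with $x_0=1$, the BEC peeling recursion at erasure probability one. No factor $\epsilon_\Delta$ appears: the leaf status of neighbouring APs is already carried by $\alpha$, and the node-perspective fraction $1-A_1$ of non-leaf APs is not an independent per-UE event. Your bookkeeping shows the slip: you define $z_t$ as an AP non-leaf probability but feed it into $w(z)=\mathrm{e}^{-\tilde\Lambda_T^{(p)}z}$ as a UE survival probability. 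The paper inserts $\epsilon_\Delta$ by analogy with the channel erasure probability, via $z^{(1)}=\epsilon_\Delta$. Since $f=\epsilon_\Delta g\le g$ with $g(z)=\beta(1-\alpha(1-z))$, uniqueness of the zero fixed point of $f$ on $[0,\epsilon_\Delta]$ does not imply $g(z)<z$ on $(0,1]$ near the threshold. A rigorous version of your argument therefore certifies only the stronger hypothesis with $\epsilon_\Delta$ replaced by $1$. Likewise, a UE seen from an AP is size-biased and has degree one with probability $\mathrm{e}^{-\Lambda_R}$, not $\Lambda_R\mathrm{e}^{-\Lambda_R}$, so your thinning step does not literally produce $\tilde\Lambda_T^{(p)}$. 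Both discrepancies are numerically minor in the paper's regime, but neither your plan nor the paper's citation makes the stated condition exact.
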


\begin{proof}
We denote by $z^{(\ell)}$ the average probability that a \ac{UE} node remains in the residual graph after the $\ell$-th iteration of the Karp-Sipser procedure or, equivalently, that the corresponding channel parameters and transmitted data remain unresolved.
Considering the independently generated edges and the sparsity of the graph discussed in Section~\ref{sec:PPP_networks}, we characterize the evolution of $z^{(\ell)}$ across iterations, which we refer to as the density evolution of the procedure.
The analysis follows standard density-evolution arguments for iterative peeling processes on sparse random graphs~\cite{Richardson2008}.
In the following, we first compute the \ac{UE} and \ac{AP} node \emph{degree distributions from a node perspective} $B(x)=\sum_{k=0}^\infty B_k x^k$ and $A(x)=\sum_{k=0}^\infty A_k x^k$, which are polynomials with coefficients $B_k$ ($A_k$) that represent the probability that a randomly chosen \ac{UE} (\ac{AP}) node has degree $k$.
The asymptotic evolution of the iterative procedure depends on the degree distribution observed both from the node perspective and from the edge perspective, as commonly done in the analysis of sparse random graphs.
Hence, we also compute the \ac{UE} and \ac{AP} node \emph{degree distributions from an edge perspective} $\beta(x)=\sum_{k=0}^\infty \beta_k x^{k-1}$ and $\alpha(x)=\sum_{k=0}^\infty \alpha_k x^{k-1}$ with coefficients $\beta_k$ ($\alpha_k$) that represent the probability that a randomly chosen edge is connected to a \ac{UE} (\ac{AP}) node of degree $k$.

According to the construction of the bipartite graph of the \ac{PPP} network, a \ac{UE} node has degree $k\!\in\!\{0,1,2,...\}$ with probability $\check{B}_k=\er^{-\Lambda_R}\frac{\Lambda_R^k}{k!}$, while an \ac{AP} node has degree $k\!\in\!\{0,1,2,...\}$ with probability $\check{A}_k=\er^{-\Lambda_T^{(p)}}\frac{\Lambda_T^{(p)k}}{k!}$.
The removal of degree-zero and degree-one \ac{UE} nodes implies a new normalization of the corresponding distribution, i.e.,  a \ac{UE} node has degree $k\!\in\!\{2,3,4,...\}$ with probability $B_k=\frac{\er^{-\Lambda_R}}{1-\er^{-\Lambda_R}(1+\Lambda_R)} \frac{\Lambda_R^k}{k!}$.
The removal of degree-one \ac{UE} nodes affects the \ac{AP} node degree distribution.
Assuming independent edges, an \ac{AP} node has degree $ k\!\in\!\{0,1,2,...\} $ with probability $  \sum_{i=0}^\infty\check{A}_{k+i} \binom{k+i}{k}\check{B}_1^i(1\!-\!\check{B}_1)^k= \er^{-\tilde{\Lambda}_T^{(p)}} \frac{ \tilde{\Lambda}_T^{(p)k}}{k!}, $ i.e., it follows a Poisson distribution with parameter $\tilde{\Lambda}_T^{(p)}$ defined in Proposition~\ref{prop:PPP_theorem} (a detailed derivation is omitted due to space constraints).
Hence, after renormalization due to the removal of degree-zero \ac{AP} nodes, an \ac{AP} node has degree $k\!\in\!\{1,2,3,...\}$ with probability $A_k=\frac{\er^{-\tilde{\Lambda}_T^{(p)}}}{1-\er^{-\tilde{\Lambda}_T^{(p)}}} \frac{\tilde{\Lambda}_T^{(p)k}}{k!}$.

Based on the \ac{UE} and \ac{AP} node degree distributions from a node perspective $B(x)$ and $A(x)$, the \ac{UE} and \ac{AP} node degree distributions from an edge perspective are given by~\cite{Richardson2008}
\vspace*{-1mm}
\begin{equation}
    \beta(x) \!=\! \frac{B'(x)}{B'(1)} \!=\! \frac{\sum_{k=2}^{\infty} k B_k x^{k-1}}{\sum_{k=2}^{\infty} k B_k} \!=\! \frac{\er^{-\Lambda_R(1-x)}-\er^{-\Lambda_R}}{1-\er^{-\Lambda_R}},
    \label{eq:beta}
\end{equation}
and
\vspace*{-2mm}
\begin{equation}
    \alpha(x) = \frac{A'(x)}{A'(1)} = \frac{\sum_{k=1}^{\infty} k A_k x^{k-1}}{\sum_{k=1}^{\infty} k A_k} = \er^{-\tilde{\Lambda}_T^{(p)}(1-x)},
    \label{eq:alpha}
\end{equation}
respectively.
The density evolution function can be derived along the lines of~\cite[Theorem~3.50]{Richardson2008}.
Thus, the average probability that channels and transmitted data symbols of a \ac{UE} cannot be identified after iteration $\ell$ is given by 
\begin{equation}
    z^{(\ell)} = \epsilon_{\Delta} \beta(1-\alpha(1-z^{(\ell-1)})),
    \label{eq:density_evolution}
\end{equation}
which yields the fixed-point equation~\eqref{eq:z} for $z^{(\ell)}=z^{(\ell-1)}=z$ where $\epsilon_{\Delta}=1-A_1$ is the fraction of \ac{AP} nodes that are not removed after the first iteration of the algorithm and $z^{(1)}=\epsilon_{\Delta}$.
The final formulation of Proposition~\ref{prop:PPP_theorem} establishes the condition for the convergence of $z^{(\ell)}$ to zero for $\ell$ sufficiently large~\cite[Theorem~3.59]{Richardson2008}, which is equivalent to the emptiness of the Karp-Sipser core.
\end{proof} 

\vspace*{-1mm}
In Fig.~\ref{fig:density_evolution}, we plot the density evolution assuming a neighborhood radius $\gamma=70$, a \ac{UE} spatial density $\lambda_T^{(p)}=5\cdot10^{-4}$, and various values of the \ac{AP} spatial density, namely, $\lambda_{R}\in\{0.0036,\,0.0072,\,0.0144\}$.
\begin{figure}[t]
    \centerline{\includegraphics[width=\plotwidth\textwidth]{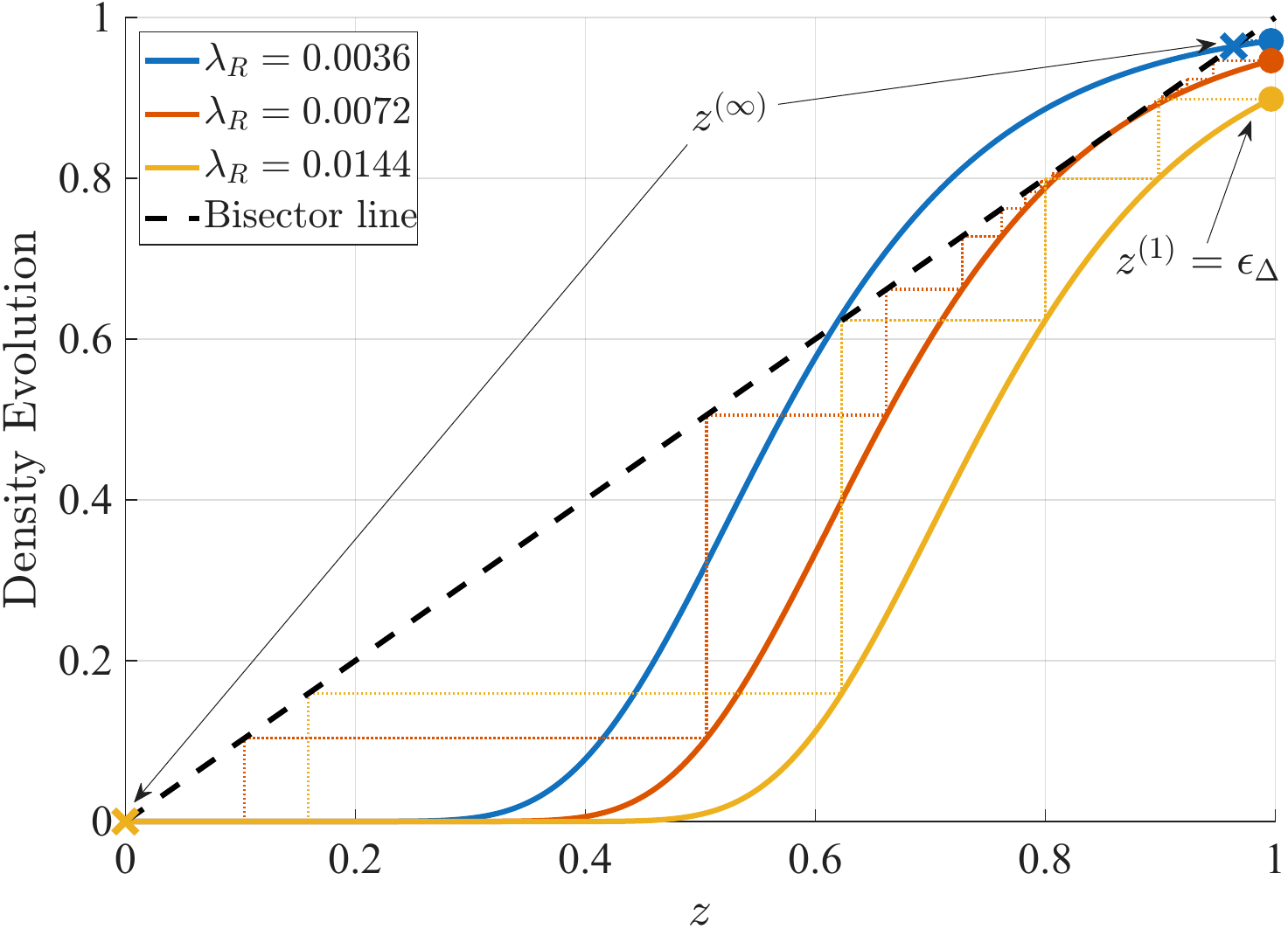}}
    \vspace*{-2mm}
    \caption{Density evolution function for $\lambda_T^{(p)}=5\cdot10^{-4}$ and $\gamma=70$.}
    \vspace*{-4mm}
    \label{fig:density_evolution}
\end{figure}
For the parameters considered in this example, $\tilde{\Lambda}_T^{(p)}\approx7.7$ and $\epsilon_{\Delta}\approx0.9965$ $\forall \lambda_R$, implying that after the initial iteration of the algorithm, represented by a circular marker in Fig.~\ref{fig:density_evolution}, channel coefficients and transmitted data are identifiable for only 0.35\% of the \acp{UE}.
However, after a sufficient number of iterations, shown as a cross marker in Fig.~\ref{fig:density_evolution}, channels and transmitted data of all \acp{UE} can be correctly identified for $\lambda_R=0.0144$, while for $\lambda_R=0.0036$, channels and transmitted data are identifiable only for a very limited fraction of \acp{UE} and the global identifiability condition is not satisfied.
Finally, for an \ac{AP} spatial density of $\lambda_R=0.0072$, a limiting condition appears and a very large number of iterations is required to identify all \ac{UE} parameters in the system.
These observations suggest that, in the asymptotic regime, it is possible to identify a region in the parameter space $(\lambda_T^{(p)}, \lambda_R)$  for which realizations of the \ac{PPP} network are identifiable with high probability, as well as a complementary region where identifiability is not guaranteed.
We refer to the former as the identifiability region for \ac{PPP} networks and investigate its properties in detail in the following section.

\begin{remark}\label{rem:pilot_length}
For a given \ac{AP} \ac{PPP} density $\lambda_R$, Proposition~\ref{prop:PPP_theorem} enables to determine the maximum density $\lambda_T^{(p)}$ for which the \ac{PPP} network remains identifiable.
Thus, it enables to determine the minimum number of pilot sequences $T_p$ requested to ensure the identifiability of channel coefficients and data in a \ac{PPP} network with given intensities $\lambda_R$ and $\lambda_T$.
Alternatively, for a fixed $\lambda_R $ and a maximum number of pilot sequences $T_p$, Proposition~\ref{prop:PPP_theorem} allows to determine the maximum admissible \ac{UE} \ac{PPP} density $\lambda_T$ under the constraint of identifiability.
\end{remark}

\vspace*{-1mm}
\section{Identifiability region of a PPP network}\label{sec:identifiability_region}
In this section, we analyze the fixed-point properties of the density evolution.
The analysis reveals a phase-transition phenomenon: the channel coefficients and transmitted data are identifiable with high probability when the \ac{PPP} network parameters lie within a certain region of the parameter space, whereas identifiability is lost outside this region.
Since the identifiability region based on the density evolution function derived in Section~\ref{sec:ID_PPP} is analytically not tractable, we introduce an approximation that is valid for $\Lambda_R\gg1$, which is well justified for \ac{CF-MaMIMO} systems.
In this approximation, we remove only degree-zero \ac{UE} and \ac{AP} nodes, whereas degree-one \ac{UE} nodes remain in the model.
In this case, the \ac{UE} and \ac{AP} node degree distributions from an edge perspective are given by $\hat{\beta}(x)=\er^{-\Lambda_R(1-x)}$ and $\hat{\alpha}(x)=\er^{-\Lambda_T^{(p)}(1-x)}$, respectively, similar to the \ac{AP} node degree distribution presented in~\eqref{eq:alpha}.
Furthermore, the initial fraction of identifiable channels is $\hat{\epsilon}_\Delta=1-\check{A}_1/(1-\er^{\Lambda_T^{(p)}})$.
The corresponding density evolution function $\hat{f}(\Lambda_T^{(p)}, \Lambda_R, z)=\hat{\epsilon}_{\Delta} \hat{\beta}(1-\hat{\alpha}(1-z))$ is given by
\begin{equation}
    \hat{f}(\Lambda_T^{(p)}, \Lambda_R, z) =  \frac{1-(1+\Lambda_T^{(p)})\er^{-\Lambda_T^{(p)}}}{1-\er^{-\Lambda_T^{(p)}}} \er^{- \Lambda_R \er^{-\Lambda_T^{(p)}z}}.
    \label{eq:density_evolution_function}
\end{equation}
The fixed point of this function characterizes the asymptotic behavior of the Karp-Sipser procedure and, consequently, the identifiability properties of the system. 
Next, we characterize key monotonicity properties of the function $\hat{f}(\Lambda_T^{(p)}, \Lambda_R, z)$, which are instrumental in proving the existence of sharp identifiability thresholds. 
\begin{lemma}
Let $\Lambda_T^{(p)}>0$, $\Lambda_R>0$, and $z\in[0,1]$.
Consider the function $\hat{f}$ in~\eqref{eq:density_evolution_function}.
Then, the following properties hold:
\begin{itemize}
    \item Given $\Lambda_R$ and $z$, $\hat{f}$ is strictly increasing in $\Lambda_T^{(p)}\!\in\!(0,\infty)$.
    \item Given $\Lambda_T^{(p)}$ and $z$, $\hat{f}$ is strictly decreasing in $\Lambda_R\!\in\!(0,\infty)$.
    \item Given $\Lambda_T^{(p)}$ and $\Lambda_R$, $\hat{f}$ is strictly increasing in $z\!\in\![0,1]$.
\end{itemize}
\end{lemma}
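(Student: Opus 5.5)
Write $\hat{f}=g(\Lambda_T^{(p)})\,h(\Lambda_T^{(p)},\Lambda_R,z)$ with
\begin{equation*}
g(t)=\frac{1-(1+t)\er^{-t}}{1-\er^{-t}},\qquad h(t,\Lambda_R,z)=\er^{-\Lambda_R \er^{-tz}} .
\end{equation*}
Both factors are strictly positive for $t>0$, since $\er^{t}>1+t$. Each claimed property will follow from the sign of a partial derivative, or from monotonicity of the factors; the second and third properties should be immediate, and the first reduces to showing that $g$ is increasing.

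\textbf{Decreasing in $\Lambda_R$.} Here $g$ does not depend on $\Lambda_R$. We have $\partial h/\partial\Lambda_R=-\er^{-tz}h<0$, so $\hat f$ is strictly decreasing for every $z\in[0,1]$.

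\textbf{Increasing in $z$.} We have $\partial h/\partial z=\Lambda_R t\,\er^{-tz}h$, which is strictly positive since $\Lambda_R,t>0$, so $\hat f$ is strictly increasing on $[0,1]$.

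\textbf{Increasing in $\Lambda_T^{(p)}$.} Write $t=\Lambda_T^{(p)}$. The factor $h$ is nondecreasing in $t$, since $\er^{-tz}$ is nonincreasing in $t$ and the map $u\mapsto \er^{-\Lambda_R u}$ is decreasing. It is constant in $t$ when $z=0$. Hence it suffices to show that $g$ is \emph{strictly} increasing on $(0,\infty)$; the product of a positive strictly increasing function and a positive nondecreasing function is then strictly increasing.

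Rewrite $g(t)=1-\dfrac{t}{\er^{t}-1}$. The function $q(t)=t/(\er^t-1)$ has derivative
\begin{equation*}
q'(t)=\frac{\er^t(1-t)-1}{(\er^t-1)^2}.
\end{equation*}
Let $r(t)=\er^t(1-t)-1$. Then $r(0)=0$ and $r'(t)=-t\er^t<0$ for $t>0$, so $r(t)<0$ on $(0,\infty)$. Therefore $q$ is strictly decreasing and $g$ is strictly increasing. This matches the form $\epsilon_\Delta=1-\tilde\Lambda/(\er^{\tilde\Lambda}-1)$ from Proposition~\ref{prop:PPP_theorem}.

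The main obstacle is establishing strict monotonicity of the prefactor $g$, since it is a ratio and its derivative is not obviously signed. The rewrite $g=1-t/(\er^t-1)$ together with the auxiliary function $r$ handles this.

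One caveat: the displayed $\hat\epsilon_\Delta$ contains a sign slip, $1-\er^{\Lambda_T^{(p)}}$ instead of $1-\er^{-\Lambda_T^{(p)}}$. I would work with the closed form in \eqref{eq:density_evolution_function}, which is the correct one.
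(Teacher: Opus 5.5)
Your proof is correct and follows the same route as the paper. The paper only asserts that the partial derivatives of $\hat f$ with respect to $\Lambda_T^{(p)}$, $\Lambda_R$, and $z$ are positive, negative, and positive; your factorization $\hat f=g\,h$ with $g(t)=1-t/(e^{t}-1)$ supplies the sign check it omits for the $\Lambda_T^{(p)}$-derivative, equivalently $g'h+g\,\partial_t h>0$ by the product rule. Your remark on the sign slip is also right: $\hat{\epsilon}_\Delta$ should read $1-\check{A}_1/(1-e^{-\Lambda_T^{(p)}})$, which is exactly the prefactor in \eqref{eq:density_evolution_function}.
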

\begin{proof}
The results follow directly from computing the partial derivatives of  $\hat{f}(\Lambda_T^{(p)}, \Lambda_R, z)$ with respect to $\Lambda_T^{(p)}$, $\Lambda_R$, and $z$ which are respectively positive, negative, and positive, over the specified domains.
\end{proof}

The monotonicity of $\hat{f}(\Lambda_T^{(p)}, \Lambda_R, z)$ with respect to $\Lambda_T^{(p)}$, for a given value of $\Lambda_R$, guarantees the existence of a \emph{critical value} ${\Lambda_T^{(p)}}^*$, such that the channel parameters and the data are identifiable for $\Lambda_T^{(p)} < {\Lambda_T^{(p)}}^*$, whereas identifiability is lost for $\Lambda_T^{(p)} > {\Lambda_T^{(p)}}^*$.
Similarly, for given $\Lambda_T^{(p)}$, there is a critical value $\Lambda_R^*$ such that identifiability is ensured for $\Lambda_R > \Lambda_R^*$, whereas it is lost for $\Lambda_R < \Lambda_R^*$.
Together, these critical thresholds determine an identifiability phase diagram in the $(\Lambda_T^{(p)},\Lambda_R)$ parameter plane, partitioning the space into identifiable and unidentifiable regions in the large-system limit.

In the following, we fix $\Lambda_T^{(p)}$ and determine the critical threshold $\Lambda_R^*$ that discriminates between identifiable and unidentifiable regimes.
Formally, $\Lambda_R^*$ is defined as
\begin{equation}
    \Lambda_R^* = \inf\{\Lambda_R\in(0,\infty): z^{(\ell)} \rightarrow 0 \;\text{for}\; \ell\rightarrow\infty\},
    \label{eq:Lambda_R_crit_definition}
\end{equation}
where $z^{(\ell)}$ is defined in~\eqref{eq:density_evolution}.
To this end, we characterize the critical point as the value at which the approximate density evolution function becomes tangent to the bisector line.
This condition is captured by the following system of equations,
\begin{equation}
    \hat{f}(\Lambda_T^{(p)}, \Lambda_R, z) = z \quad \text{and} \quad \frac{\partial \hat{f}(\Lambda_T, \Lambda_R, z)}{\partial z} = 1,
    \label{eq:system_of_equations}
\end{equation}
which has the solution,
\begin{align}
    \Lambda_R^* &= -\ln\frac{z^*}{\hat{\epsilon}_\Delta} \cdot \er^{\Lambda_T^{(p)}z^*}
    \label{eq:Lambda_R_crit}, \\
    z^* &= \hat{\epsilon}_\Delta \cdot \er^{W\left(-\frac{1}{\Lambda_T^{(p)}\cdot\hat{\epsilon}_\Delta}\right)},
    \label{eq:z_crit}
\end{align}
where $W(\cdot)$ denotes the principal branch of the Lambert W function.

\begin{remark}
The real-valued Lambert W function $W(x)$ is defined only for $x\geq-\frac{1}{\er}$.
Hence, the solution~\eqref{eq:Lambda_R_crit}-\eqref{eq:z_crit} exists only if $\Lambda_T^{(p)}\cdot\hat{\epsilon}_\Delta\geq\er \;\Leftrightarrow\; \Lambda_T^{(p)}\gtrsim3.1606$.
Otherwise, there are no values of the parameters $\Lambda_R$ and $z$ that yield a function $\hat{f}(\Lambda_T^{(p)}, \Lambda_R, z)$ that is tangent to the bisector line.
However, since the approximation considered in this section is valid for massive \ac{MIMO} systems with $\Lambda_R\gg1$, the condition $\Lambda_T^{(p)}\gtrsim3.1606$ is usually satisfied.
\end{remark}

Rather than expressing identifiability conditions solely in terms of the average number of \acp{UE} or \acp{AP} within a $\gamma$-neighborhood, from a system design perspective, it is more informative to parameterize the model in terms of \ac{UE} and \ac{AP} spatial intensities $\lambda_T^{(p)}$ and $\lambda_R$ along with the neighborhood radius $\gamma$ since they are independent parameters.
Accordingly, the system of equations~\eqref{eq:system_of_equations} can be equivalently expressed in terms of $\lambda_T^{(p)}, \lambda_R$ and $\gamma$ leading to an alternative characterization of the identifiability region. 
To study the impact of the neighborhood radius $\gamma$, we determine the identifiability region in the  $(\lambda_T^{(p)}, \lambda_R)$ plane for different values of the parameter $\gamma$.
Fig.~\ref{fig:id_region} illustrates the resulting  identifiability regions.
\begin{figure}[t]
    \centerline{\includegraphics[width=\plotwidth\textwidth]{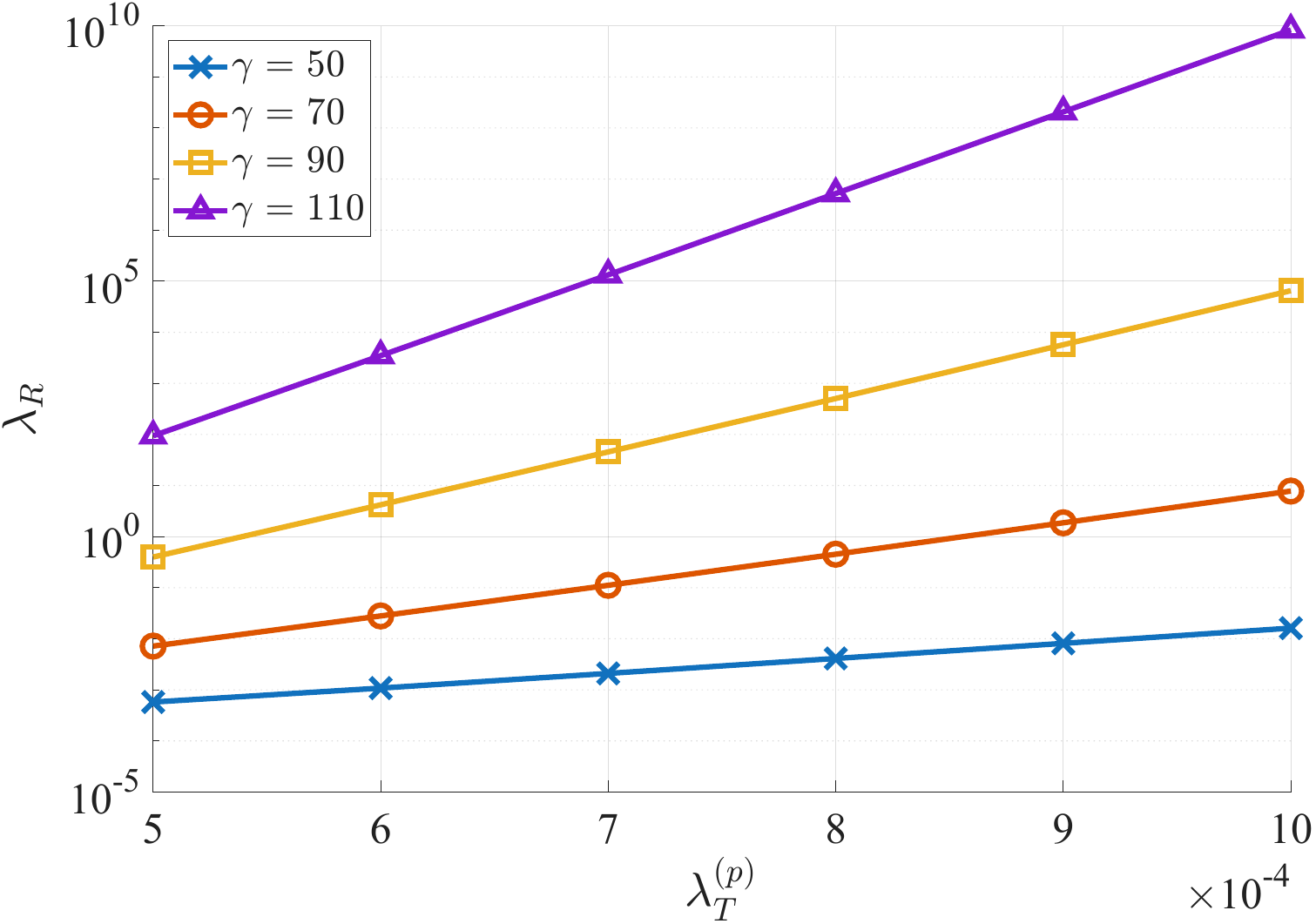}}
    \vspace*{-2mm}
    \caption{Identifiability region $\lambda_{R}$ versus $\lambda_{T}^{(p)}$ for various radii $\gamma$.}
    \vspace*{-4mm}
    \label{fig:id_region}
\end{figure}
All points above the curve correspond to identifiable systems while the points below correspond to unidentifiable configurations.
As shown in Fig.~\ref{fig:id_region}, the identifiability region is convex and becomes larger as the radius $\gamma$ decreases.

\vspace*{-1mm}
\section{Monte Carlo Simulations}\label{sec:sims}
In this section, we present Monte Carlo simulations to validate the results on the identifiability region described in the previous section.
We consider a square network of side length $D=1000\,$m, two values for the neighborhood radius $\gamma\in\{50,\,70\}$, and various \ac{UE} and \ac{AP} intensities $\lambda_T^{(p)}$ and $\lambda_R$.
The values of $\lambda_R$ are set as a function of $\lambda_T^{(p)}$ and $\gamma$ such that $\lambda_R=2^i\cdot\lambda_R^*$ with $i\in\{0,\pm1,\pm2,\pm3\}$, $\lambda_R^*=\Lambda_R^*/(\pi\gamma^2)$, and $\Lambda_R^*$ computed according to~\eqref{eq:Lambda_R_crit}.
Furthermore, we construct the bipartite graph in two different ways: (a) by generating independent edges with matched macroscopic statistics as described in Section~\ref{sec:ID_PPP} and (b) by generating edges between \acp{UE} and \acp{AP} whose pairwise distance is not greater than $\gamma$ as described in Section~\ref{sec:PPP_networks}, yielding a \ac{BRGG}.
The identifiability region derived in Section~\ref{sec:identifiability_region} relies on the assumption of independent edges.
Hence, these simulations allow us to verify the extent to which approximating \acp{BRGG} by independent-edge graphs is justified.
Besides, we neglect all degree-zero nodes as well as degree-one \ac{UE} nodes as described in Section~\ref{sec:ID_PPP}.

For each parameter configuration $(\lambda_T^{(p)}, \lambda_R, \gamma)$, we generate $10^4$ independent network/graph realizations and assess the identifiability of the channels and data signals by applying the Karp-Sipser algorithm.
The results are presented in terms of the \emph{identifiability rate} $r_\text{ID}$ which is defined as the fraction of graphs for which the Karp-Sipser core is empty.
Furthermore, we consider the \emph{per-UE identifiability rate} $r_\text{ID-UE}$ defined as the average fraction of \acp{UE} whose channel parameters and transmitted data are identifiable in a given network.

The results for $\gamma=50\,$m and $\gamma=70\,$m are depicted in Figs.~\ref{fig:r_id_gamma50} and~\ref{fig:r_id_gamma70}, respectively.
\begin{figure*}[t]
    \centering
    \subfloat[$r_\text{ID}$ for independent-edge graphs.]{\includegraphics[width=\plotwidthFull\textwidth]{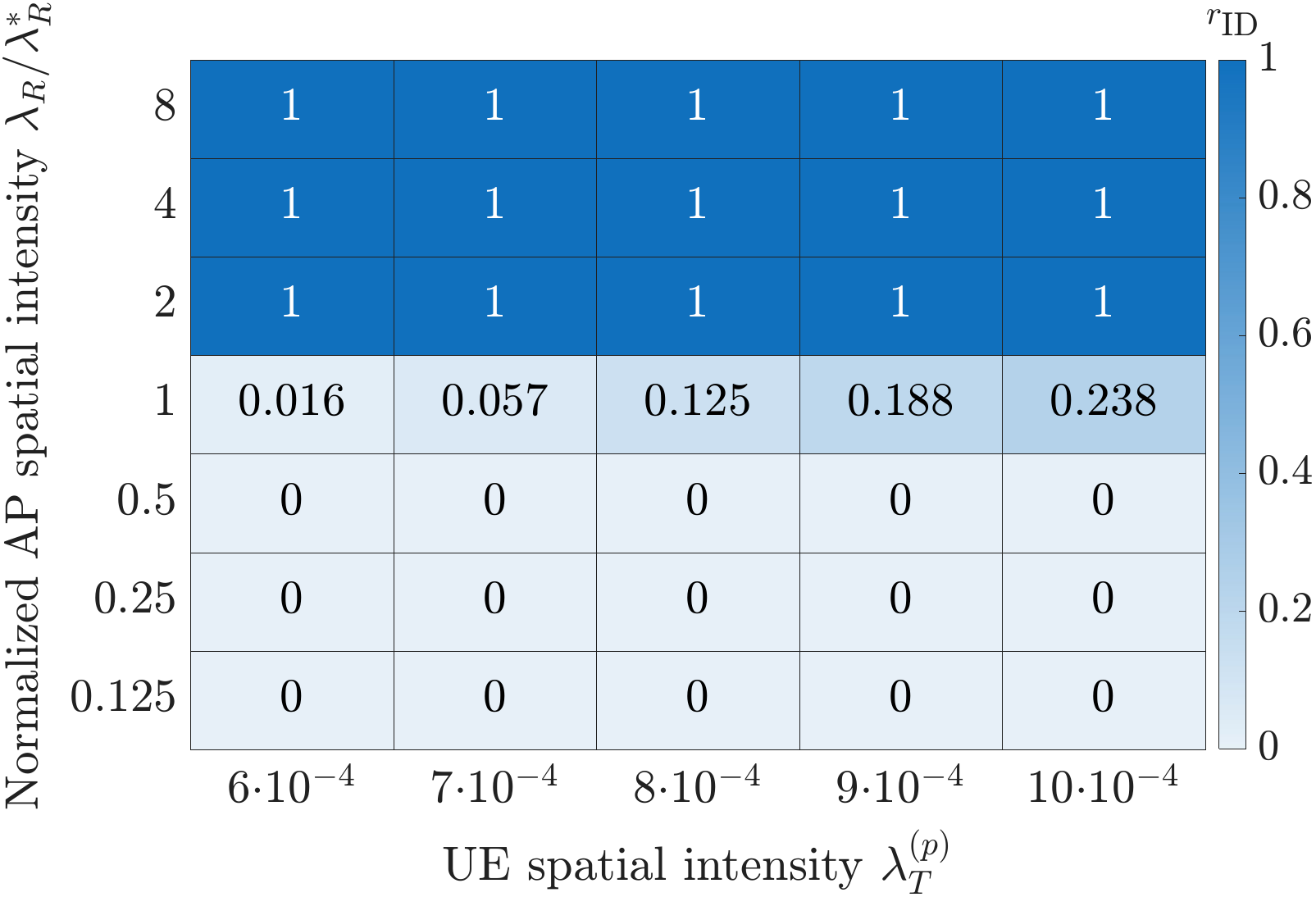}
        \label{fig:r_id_independent_gamma50}}\hfill
    \subfloat[$r_\text{ID}$ for geometric graphs.]{\includegraphics[width=\plotwidthFull\textwidth]{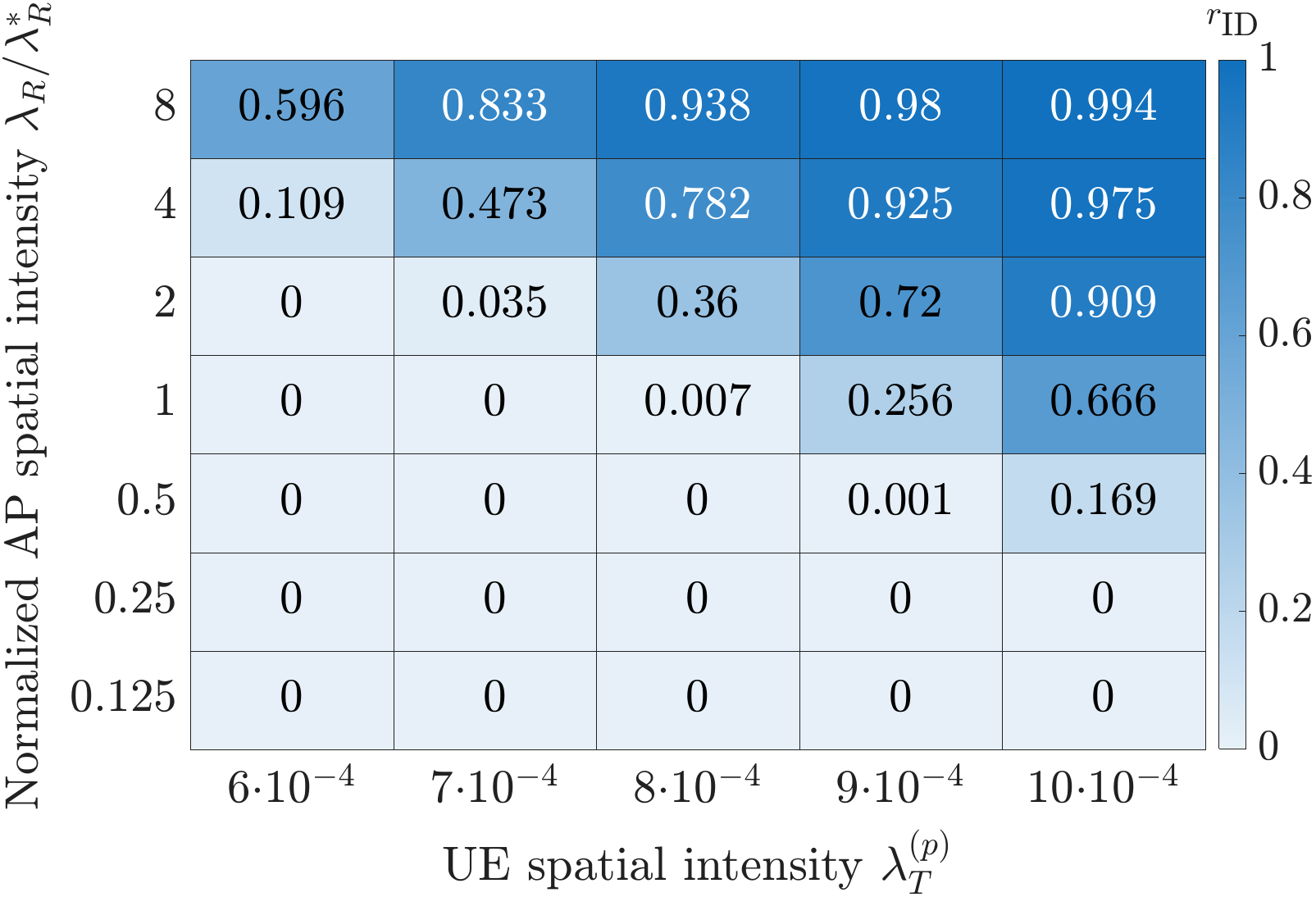}
        \label{fig:r_id_geometric_gamma50}}\hfill
    \subfloat[$r_\text{ID-UE}$ for geometric graphs.]{\includegraphics[width=\plotwidthFull\textwidth]{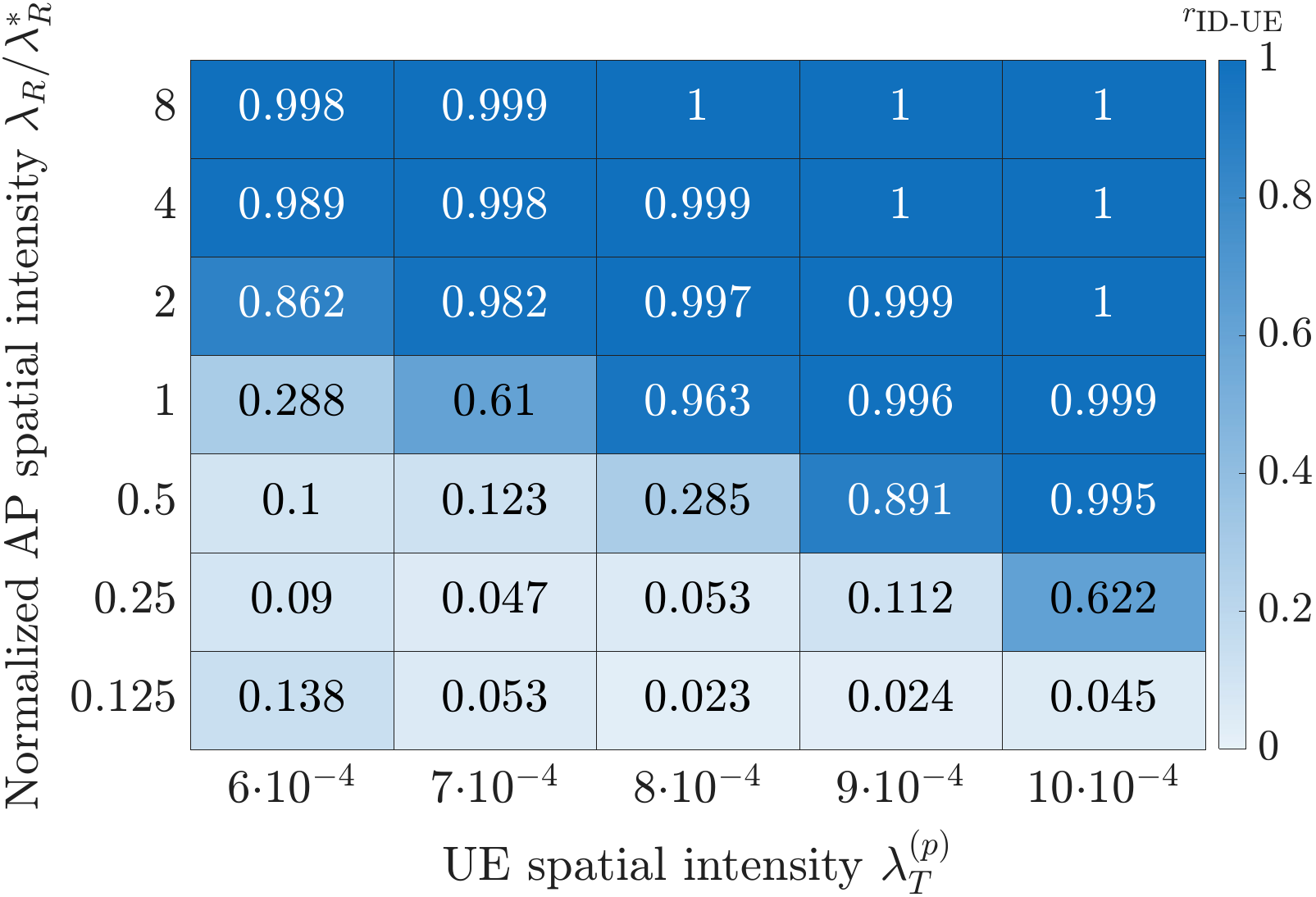}
        \label{fig:r_id_ue_geometric_gamma50}}
    \caption{Identifiability rates for $\gamma=50$.}
    \label{fig:r_id_gamma50}
    \vspace*{-6mm}
\end{figure*}
\begin{figure*}[t]
    \centering
    \subfloat[$r_\text{ID}$ for independent-edge graphs.]{\includegraphics[width=\plotwidthFull\textwidth]{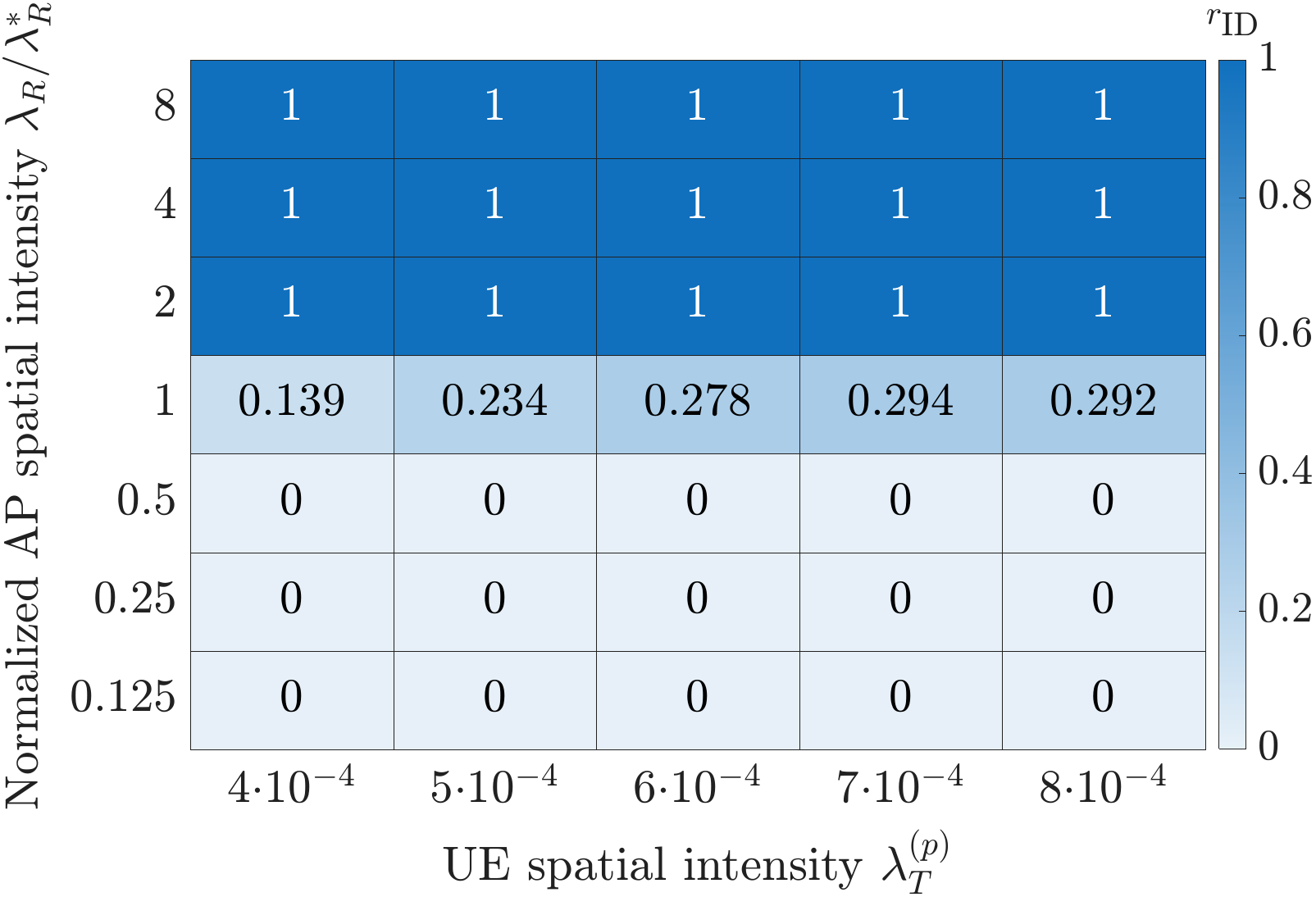}
        \label{fig:r_id_independent_gamma70}}\hfill
    \subfloat[$r_\text{ID}$ for geometric graphs.]{\includegraphics[width=\plotwidthFull\textwidth]{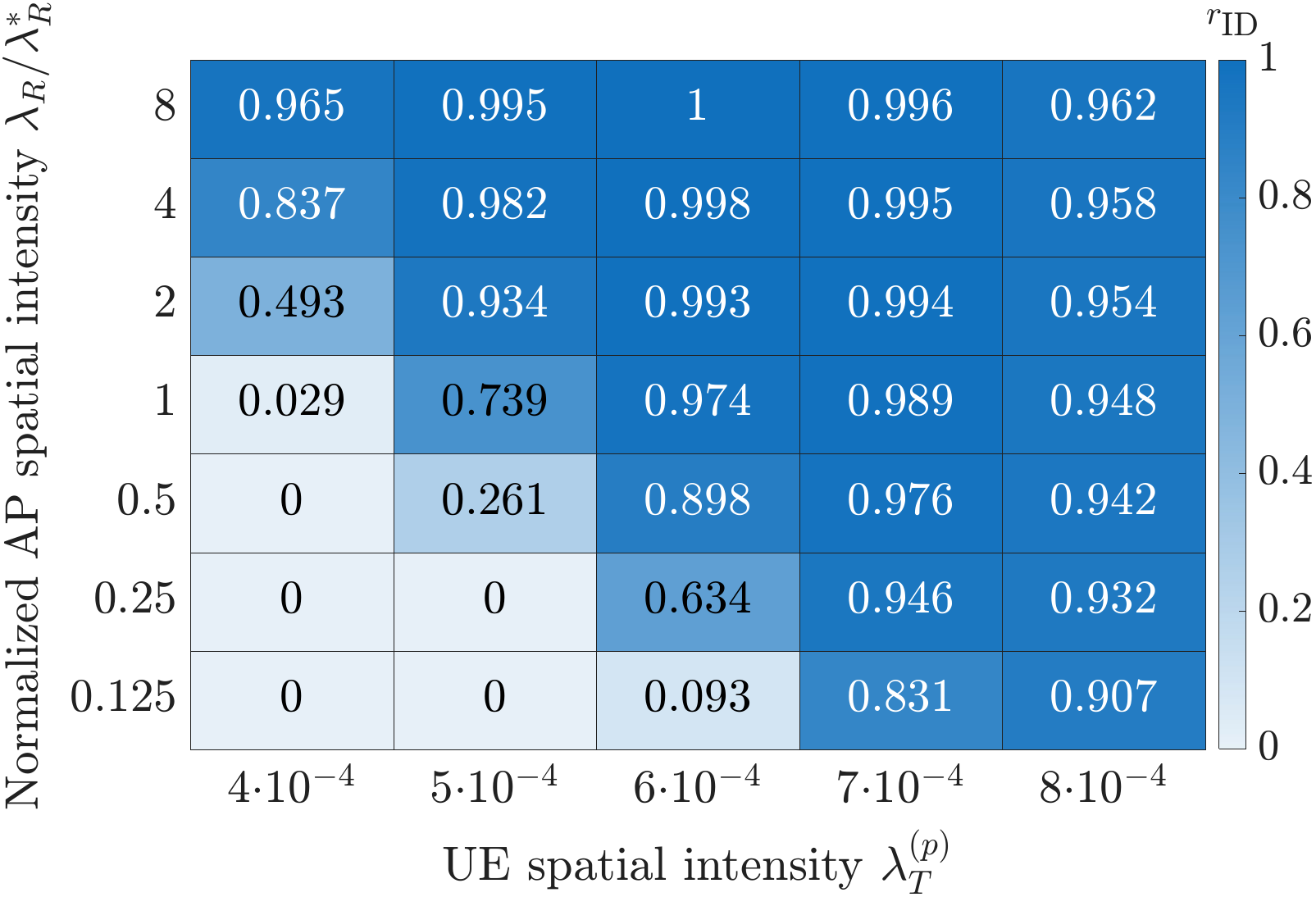}
        \label{fig:r_id_geometric_gamma70}}\hfill
    \subfloat[$r_\text{ID-UE}$ for geometric graphs.]{\includegraphics[width=\plotwidthFull\textwidth]{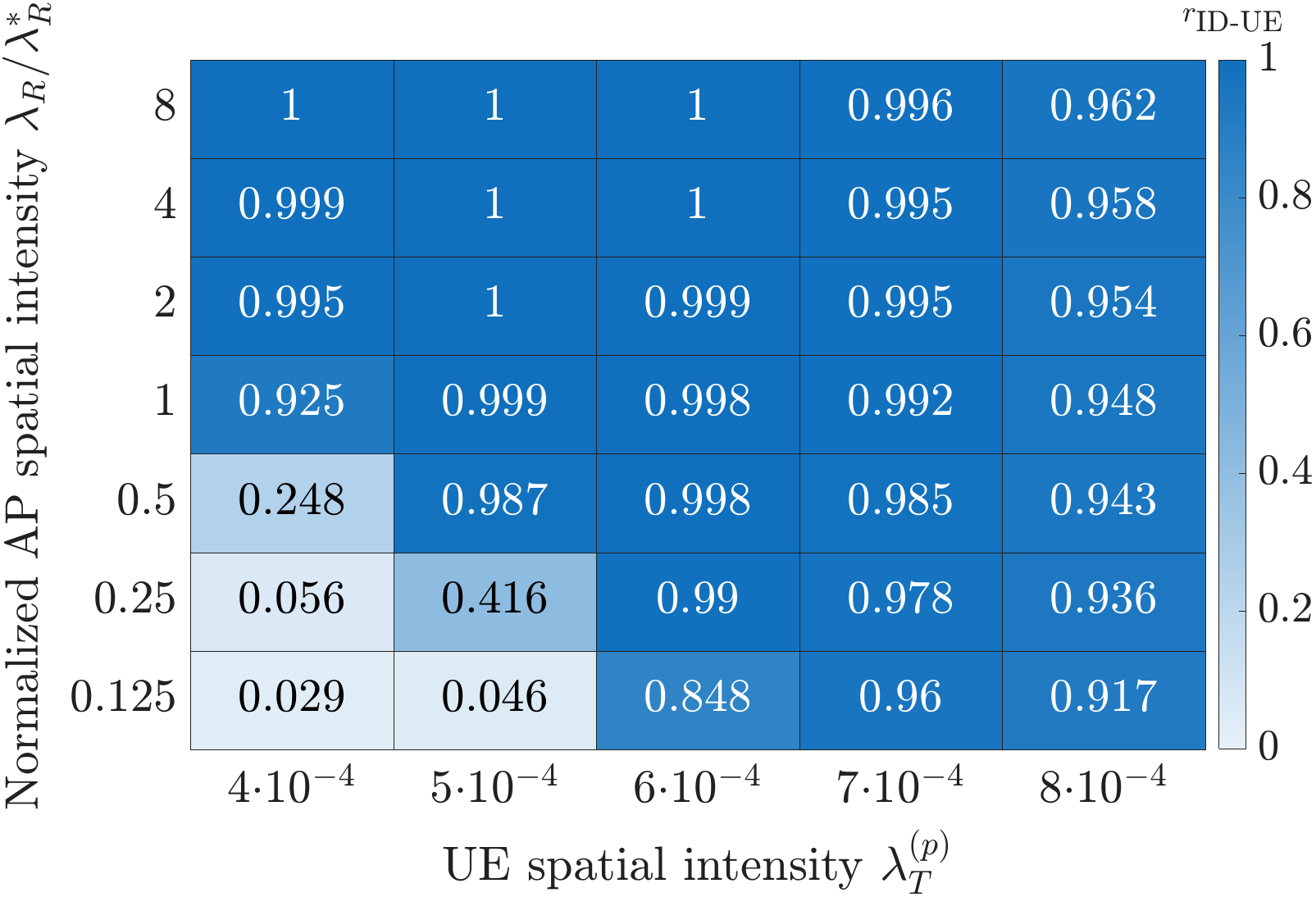}
        \label{fig:r_id_ue_geometric_gamma70}}
    \caption{Identifiability rates for $\gamma=70$.}
    \label{fig:r_id_gamma70}
    \vspace*{-4mm}
\end{figure*}
For graphs with independently generated edges, the identifiability rate $r_\text{ID}$ is equal to one for $\lambda_R>\lambda_R^*$ and zero for $\lambda_R<\lambda_R^*$, showing the sharp phase-transition behavior predicted in Sections~\ref{sec:ID_PPP} and~\ref{sec:identifiability_region}.
In contrast, for \acp{BRGG}, the phase transition is less sharp and the threshold value $\lambda_R^*$ differs from the previous class of graphs, depending on the \ac{UE} spatial density $\lambda_T^{(p)}$ and $\gamma$.
For small $\Lambda_T^{(p)}$, the \ac{AP} spatial density required to ensure a high identifiability rate is larger than the one predicted for independent-edge graphs, determined by~\eqref{eq:Lambda_R_crit}.
In contrast, for large $\Lambda_T^{(p)}$, the critical \ac{AP} spatial density is smaller for \acp{BRGG} than for graphs with independent edges.
Finally, when considering the per-UE identifiability rate $r_\text{ID-UE}$, we note that the required \ac{AP} density can be lowered in practice since usually it suffices to ensure good performance for the majority of the \acp{UE}, rather than for all of them.
For independent-edge graphs, the identifiability rate per \ac{UE} $r_\text{ID-UE}$ is not shown but, similarly to the identifiability rate $r_\text{ID}$, it remains low for $\lambda_R<\lambda_R^*$.

\vspace*{-1mm}
\section{Conclusion}\label{sec:concl}
In this work, we analyzed the identifiability of semi-blind estimation of user channels and data signals in \ac{CF-MaMIMO} networks.
To this end, we focused on \ac{CF-MaMIMO} networks in which \acp{AP} and \acp{UE} are spatially deployed according to \acp{PPP}.
To enable a tractable asymptotic identifiability analysis, the resulting \acp{BRGG} modeling  such networks were approximated  by associated independent-edge random graphs. 
Within this framework, we characterized the identifiability region  as a function of macroscopic network parameters, namely the \ac{UE} and \ac{AP} spatial intensities and the neighborhood radius which defines the distance beyond which the channels are assumed to be negligible.
Finally, we provided Monte Carlo simulations  to verify the derived identifiability region for graphs with independent edges and assess the extent to which the surrogate model accurately captures the properties of \acp{BRGG}.

\vspace*{-1mm}
\linespread{0.87}
\bibliographystyle{IEEEtran}
\bibliography{IEEEabrv,references_new}

\end{document}

%% file: abbreviations.tex
\begin{acronym}
\acro{5G}{fifth generation}
\acro{6G}{sixth generation}
\acro{AMP}{approximate message passing}
\acro{AP}{access point}
\acro{APP}{a posteriori probability}
\acro{AWGN}{additive white Gaussian noise}
\acro{B5G}{beyond fifth generation}
\acro{BEC}{binary erasure channel}
\acro{BG}{Bernoulli-Gaussian}
\acro{BPSK}{binary phase-shift-keying}
\acro{BRGG}{bipartite random geometric graph}
\acro{CDF}{cumulative distribution function}
\acro{CF-MaMIMO}{cell-free massive multiple-input multiple-output}
\acro{CPU}{central processing unit}
\acro{CSI}{channel state information}
\acro{DER}{detection error rate}
\acro{DFT}{discrete Fourier transform}
\acro{EP}{expectation propagation}
\acro{GaBP}{Gaussian belief propagation}
\acro{GF-CF-MaMIMO}{grant-free cell-free massive multiple-input multiple-output}
\acro{iid}[i.i.d.]{independent and identically distributed}
\acro{IoT}{Internet-of-Things}
\acro{JACD}{joint activity detection, channel estimation, and data detection}
\acro{JACD-EP}{joint activity detection, channel estimation, and data detection via expectation propagation}
\acro{JAC}{joint activity detection and channel estimation}
\acro{JAC-EP}{joint activity detection and channel estimation via expectation propagation}
\acro{JACD-EP-BG}{joint activity detection, channel estimation, and data detection via expectation propagation with Bernoulli-Gaussian distributions}
\acro{JCD}{joint channel estimation and data detection}
\acro{KL}{Kullback-Leibler}
\acro{LDPC}{low-density parity-check}
\acro{LSFC}{large-scale fading coefficient}
\acro{MaMIMO}{massive multiple-input multiple-output}
\acro{MIMO}{multiple-input multiple-output}
\acro{MAP}{maximum a posteriori}
\acro{MMSE}{minimum mean squared error}
\acro{MMV-AMP}{multiple measurement vector approximate message passing}
\acro{NMSE}{normalized mean squared error}
\acro{PC}{pilot contamination}
\acro{PPP}{Poisson point process}
\acro{QAM}{quadrature amplitude modulation}
\acro{QoS}{quality of service}
\acro{RGG}{random geometric graph}
\acro{SER}{symbol error rate}
\acro{UE}{user equipment}
\acro{UPP}{uniform point process}
\acro{VB}{variational Bayes}
\acro{VB-BP-EP}[VB-EP]{variational Bayes, belief propagation, and expectation propagation}
\acro{VPS}{virtual pilot sequence}
\acro{VPM}{virtual pilot matrix}
\end{acronym}

%% file: macros.tex
\newcommand{\lvec}[1]{\ensuremath{\mathbf{#1}}}  		
\newcommand{\lmat}[1]{\ensuremath{\mathbf{#1}}}  		
\newcommand{\er}{\mbox{\ensuremath{\mathrm{e}}}}
\newcommand{\Cset}{\mathbb{C}}
\newcommand{\Rset}{\mathbb{R}}

\newcommand{\pilotsymb}{\ensuremath{p}}
\newcommand{\datasymb}{\ensuremath{d}}
\newcommand{\pilot}[1]{\ensuremath{#1^\pilotsymb}}
\newcommand{\data}[1]{\ensuremath{#1^\datasymb}}